\numberwithin{equation}{section}
\numberwithin{figure}{section}
\theoremstyle{plain}
\newtheorem{thm}{Theorem}
  \theoremstyle{plain}
  \newtheorem{prop}[thm]{Proposition}
   \newtheorem{definition}[thm]{Definition}
  \theoremstyle{plain}
  \newtheorem{assumption}[thm]{Assumption}
  \theoremstyle{remark}
  \newcommand{\alias}[2]{
\providecommand{#1}{}
\renewcommand{#1}{#2}
}
\alias{\P}{\mathbb{P}}
\alias{\N}{\mathcal{N}}
\alias{\L}{\mathcal{L}}
\alias{\Z}{\mathbb{Z}}
\alias{\Q}{\mathbb{Q}}
\alias{\R}{\mathbb{R}}
\alias{\C}{\mathcal{C}}
\alias{\T}{\mathbb{T}}
\alias{\E}{\mathbb{E}}
\alias{\H}{\mathcal{H}}
\alias{\B}{\mathcal{B}}
\alias{\M}{\mathcal{M}}
\alias{\G}{\mathcal{G}}
\alias{\Y}{Y_{\bullet}}
\def\Esp{\mathbb{E}}
\let\myTOC\tableofcontents
\renewcommand\tableofcontents{%
  \frontmatter
  \pdfbookmark[1]{\contentsname}{}
  \myTOC
  \mainmatter }
\def\LyX{\texorpdfstring{%
  L\kern-.1667em\lower.25em\hbox{Y}\kern-.125emX\@}
  {LyX}}
\author{Ren\'e A\"id\footnote{Université Paris-Dauphine, PSL Research University, LEDa.}, \hspace{1.5mm} Luciano Campi\footnote{Corresponding author; Statistics Department, London School of Economics. Postal address: Statistics Department, Columbia House, London School of Economics and Political Science, 10 Houghton Street, London WC2A2AE. Email: l.campi@lse.ac.uk}, \hspace{1.5mm} Delphine Lautier\footnote{Universit\'e Paris Dauphine, PSL Research University, DRM UMR CNRS 7088, and FiME Lab - Laboratoire de Finance des March\'es d'Energie. 
}}
\title{ On the spot-futures no-arbitrage relations \\ in commodity markets \thanks{This study was supported by the Finance and Sustainable Development Chair sponsored by EDF and CACIB, and hosted by Ecole Polytechnique, Université Paris-Dauphine, PSL Research University and CREST.}}
\begin{document}

\maketitle

\begin{abstract}
In commodity markets the convergence of futures towards spot prices, at the expiration of the contract, is usually justified by no-arbitrage arguments. In this article, we propose an alternative approach that relies on the expected profit maximization problem of an agent, producing and storing a commodity while trading in the associated futures contracts. In this framework, the relation between the spot and the futures prices holds through the well-posedness of the maximization problem. We show that the futures price can still be seen as the risk-neutral expectation of the spot price at maturity and we propose an explicit formula for the forward volatility. Moreover, we provide an heuristic analysis of the optimal solution for the production/storage/trading problem, in a Markovian setting. This approach is particularly interesting in the case of energy commodities, like electricity: this framework indeed remains suitable for commodities characterized by storability constraints, when standard no-arbitrage arguments cannot be safely applied.  \\

Keywords: futures contracts; no-arbitrage relation; commodity; production; storage; energy. 

JEL Codes: C32; G13; Q4; Q02. 
\end{abstract}
\newpage

\section{Introduction}
In this article we aim at explaining, through a parsimonious model, the relation between the spot and the futures prices of a commodity. The prices relation derives from  the well-posedness of the optimization problem of an operator involved in the production and (when possible) the storage of a commodity. This producer also trades in the futures market. Such an approach is different from the classical no-arbitrage reasoning usually employed to explain the temporal basis in commodity markets. Interestingly, it remains relevant for non storable commodities, when standard no-arbitrage arguments cannot be safely applied\footnote{This does not mean, however, that our framework is relevant only to non-storable commodities, like electricity, nor that it is focused on commodities with storable inputs (A\"id \emph{et al}, 2009 and 2013 \cite{Aid09,Aid13}).}. 

The traditional no-arbitrage relation, initially developed for investment assets like stocks and bonds, states that the futures price $F(t,T)$ of a contract written on the investment asset, observed at date $t$ for a delivery at $T$, is the spot price $S_t$ capitalized at the interest rate $r$ between $t$ and $T$: $F(t,T) = S_t e^{r(T-t)}$. The cost of carrying commodities, however, can not be reduced to the interest rate. One need to take into consideration warehousing and/or depreciation costs (see, among others, Fama and French, 1987 \cite{Fama_French_87} or, more recently, Eydeland, 2002 \cite{Eydeland02}, Chap. 4, pp. 140-43). 

Introducing storage costs in the analysis is not even sufficient to depict the behavior of the prices spread in commodity markets: in addition, there is a need to explain negative spreads\footnote{A positive spread (a contango) arises when at date $t$, the futures price $F(t,T)$ is higher than the spot price $S_t$. A negative spread (a backwardation) is a situation where  $F(t,T) < S_t$. } and the simultaneous presence of positive inventories and backwardated prices (see among others, Working 1933 \cite{Working33}). The no-arbitrage relation can only be preserved by the introduction of an extra variable: the convenience yield (see Kaldor, 1940 \cite{Kaldor40} or more recently, Lautier, 2009 \cite{Lautier09}). The latter represents an implicit revenue. The presence of such a yield, which is associated with the holding of the physical commodity but not with the futures contract, explains that the operators maintain their inventories even in the presence of negative spreads. With the introduction of this variable, the no arbitrage relation is extended to:  $F(t,T) = S_t e^{(r- y)(T-t)}$, where $y$ is the convenience yield net of storage costs.

Financial models explicitly designed for the pricing of commodity derivatives and relying on the concept of convenience yield have been quite extensively built in the literature (see Lautier, 2005 \cite{Lautier_05} for a review). These models perform well, especially the two-factor model proposed by Schwartz (1997 \cite{Schwartz97}), where the convenience yield is mean reverting and plays the role of a dividend yield in the drift of the spot price. In this setting, however, the relation between the spot and the futures prices relies on the hypothesis that arbitrage operations are perfect. Consequently, there exists a unique risk-neutral measure, that can be calibrated thanks to market data. Moreover the spot price, in such analysis, is exogenous.\\

An effort has been undertaken in the economic and financial literature in order to propose structural models that rely on endogenous spot prices to explain the interaction between the spot and futures prices. In such cases, the spot price derives from production, consumption and storage of the commodity. In this setting, the spot price is the result of an equilibrium between production and consumption. The futures price is defined as the expectation of the spot price. \\
Within this literature, our model is close to those of Brennan (1958 \cite{Brennan58}) and Routledge \emph{et al} (2000 \cite{Routledge00}). In these works, the authors develop production-storage models that connect the spot and the futures prices. Nevertheless, the models are mainly developed in order to allow for comparative statics and little information on the conditions under which no-arbitrage holds is given (this is especially true for the model of Routledge \emph{et al} (2000 \cite{Routledge00}). \\
The development of electricity markets in the last thirty years has introduced new challenges in the literature on commodity prices. More precisely, the idea that the futures price of a commodity is linked to its spot price by a convenience yield has been highly debated, as electricity cannot be stored. Many authors have argued that the convenience yield may not apply in the case of electricity. Still, futures prices of electricity exhibit both contango and backwardation situations (Benth \emph{et al}, 2013 \cite{Benth13}). Others have also stressed that the no-arbitrage method used in mathematical finance to obtain a risk-neutral measure should be reconsidered (Benth \emph{et al}, 2003 \cite{Benth03}). Finally, due to the particular nature of the futures contracts negotiated in these markets (which are basically swaps, see Frestad, 2010 \cite{Frestad10}), even the convergence of the futures price to the spot price has become an issue (Viehmann, 2011 \cite{Viehmann11}).\\
Since in the case of electricity the pricing of derivatives cannot rely on the concept of a convenience yield - at least in its restricted initial definition, where the yield comes from the holding of inventories - this non-storable commodity has fostered research on how to restore a relation between the spot and the futures prices. The first approaches relied on two-date equilibrium models (Anderson and Hu, 2008 \cite{Anderson08}, Bessembinder and Lemmon, 2002 \cite{Bessembinder02}, Aid \emph{et al}, 2011 \cite{Aid11}), where the risk-neutral measure was extracted from the risk aversion parameters of the agents. A more complex approach consists in the extension of the market beyond the underlying asset, to include or production factors (A\"id \emph{et al}, 2009 and 2013 \cite{Aid09,Aid13}), production constraints (Bouchard and Nguyen Huu, 2013 \cite{Bouchard13}) or gas storage levels (Douglas and Popova, 2008 \cite{Douglas08}). \\

In this article, we propose an analysis where the futures prices can be related to the spot price by arbitrage arguments, independently of the storability properties of the underlying asset. This does not mean, however, that the model is especially designed for non-storable commodities, as done, for example, in Benth \emph{et al} (2007 \cite{Benth_Kallsen_Meyer-Brandis_07} and 2008 \cite{Benth_Saltyte-Benth_Koekebakker_08}), Meyers-Brandis and Tankov (2008 \cite{Meyer-Brandis_Tankov}), and Hess (2013 \cite{Hess_13}), among others. \\
Our framework could be applied to any commodity, would it be storable or not. This can be useful for a large range of industrial companies operating in commodity futures markets, such as energy utilities, airline companies, producers and processors of metals or of agricultural products. The profit maximization of such companies relies on the spot prices as well as on the futures prices. Within our framework, once a spot price model and the market price of demand risk are chosen, there is only one futures price model that would be consistent with their optimization process.\\

The remaining of the article is organized as follows. First, we first propose a simple model of production, storage and trading, where an operator maximizes his expected utility. This operator has no impact on the spot price. The storage costs and the production function are supposed to be convex. The storage capacity is bounded. The same is true for the instantaneous storage and withdrawal. Moreover, the operator has access to a derivative market where a futures written on the commodity is traded. The contract is negociated on one maturity only and the liquidity of the futures market is supposed to be unlimited. Naturally, the operator has no impact on the futures price. \\
Second, we show that the existence of a risk-neutral measure is the consequence of the finiteness of the operator's value function. This result is linked with previous results established in, e.g., Rogers (1994, \cite{rogers}) and Ankirchner and Impeller (2005 \cite{Ankirchner05}) for a pure trader. This result means that if there were no risk-neutral measures, the operator could take advantage of his production capacity or storage facilities to get an infinite utility. In the same section, we also prove that the futures price always converges to the spot price, regardless of the storability properties of the commodity. This result is important, especially for the analysis of electricity markets. It has been indeed pointed out that the electricity futures prices predict realized spot prices rather poorly (Prevot \emph{et al}, 2004 \cite{Prevot04}). This study was however done on monthly contracts while the convergence issue concerns only maturities close to zero. For instance, day-ahead futures contracts quoted on the German electricity market exhibit lower discrepancy with the realized spot prices (see Viehmann, 2011 \cite{Viehmann11}).\\
We finally discuss the trading-production problem faced by the operator, with a specification of the demand dynamics. We obtain an explicit formula for the volatility of the futures contract and we relate it to the volatility of the underlying conditional demand for the commodity. Moreover we argue that, in a Markovian setting and for an agent having a power type utility function, the optimal command for the management of storage is of a bang-bang type. Not surprisingly, the decision to store or to withdraw the commodity is based on the comparison between the spot price and the ratio between the marginal utility of one unit of storage and the marginal utility of the wealth of the operator. Lastly, in this setting the optimal trading strategy on the futures market is such that the operator holds a long position when the futures prices exhibit a positive trend.

\section{The model for the individual producer}
\label{sec:model}

In this section we provide the full description of a simple model of production, storage and trading. This model will be used and investigated throughout the whole article. For the sake of readability, the proofs are all gathered in the Appendix.

Let $(\Omega, (\mathcal F_t)_{t\in [0,T]}, \mathbb P)$ be a filtered probability space satisfying the usual conditions, i.e. $(\mathcal F_t)$ is a $\mathbb P$-completed and right-continuous filtration. Moreover, for mathematical convenience, we assume that $\mathcal F_{T-}=\mathcal F_T$. This property is satisfied, e.g., when $(\mathcal F_t)$ is the natural filtration generated by a multivariate Brownian motion. All processes considered in this article are assumed to be defined in this space and adapted to this filtration.

Our agent is the producer of a commodity. He also trades in the associated derivative market. This producer is a price taker. On the physical market, he has the possibility to sell the whole quantity he produced at a specific date, or less (he then stores a part of the production), or more (he then reduces his stocks). On the derivative market, he has the possibility to buy or to sell a certain amount of a unique futures contract. His choices depend on the price conditions he faces, both on the physical and on the derivative markets. In what follows, we will first focus on the physical market. Then we will expose the trading activity on the derivative market.

\subsection{The profit on the physical market}
\label{subs:profit}
On the physical market, the agent has the possibility to decide how much he produces and to manage his stocks dynamically. 
His instantaneous profit $\pi_t$ can be written as follows: 
\begin{equation}
\pi_t = (q_t -u_t) S_t - c(q_t) - k(X_t)
\label{eq:physicalprofit}
\end{equation}
where: 
\begin{itemize}
\item $q_t$ is the production of the agent,
\item $u_t$ is the amount stored ($u_t>0$) or withdrawn ($u_t<0$),
\item $S_t$ is the spot price of the commodity,
\item $c: \mathbb R_+ \to \mathbb R$ is the production function, with $c(0)=0$,
\item $k : \mathbb R_+ \to \mathbb R$ is the storage function, with $k(0)=0$,
\item $X_t$ is the storage level at time $t$.
\end{itemize}
\begin{assumption}\label{ck} We assume that both functions $c$ and $k$ are differentiable, strictly increasing, strictly convex and nonnegative. \end{assumption}
Moreover, we will always work under the following assumption on the dynamics of the spot price:
\begin{assumption}\label{Scont}
Let $(S_t)$ be a bounded continuous process. 
\end{assumption}

The producer faces several constraints on the physical market. They can be summarized as follows: 
\begin{itemize}
\item the agent's production cannot exceed his capacity $\overline q$ : $q_t \in [0, \overline{q}]$ for some $\overline q >0$;
\item instantaneous storage and withdrawal are bounded, with $u_t \in [\underline{u}, \overline{u}]$ for given thresholds $\underline u < 0 < \overline u $;
\item the storage capacity itself is bounded by some $\overline X \geq 0$, so that adding the positivity constraint on the inventories we have $X_t \in [0,\overline{X}]$ a.s. for all $t\in [0,T]$;
\item the storage dynamics is: $dX_t = u_t dt $, with $X_0 = u_0 > 0$.
\end{itemize}
To simplify, we assume that there is no uncertainty on the production of the commodity. The only source of uncertainty comes from the demand side, which is reasonable for a large number of commodities. 

\subsection{The trading activity on the derivative market}
On the derivative market there is only one futures contract available, for a given maturity $T>0$. The price of this contract at $t$ is $F_t = F(t,T)$. We assume that the futures price process $(F_t)$ is a continuous semi-martingale, adapted to the filtration $(\mathcal F_t )$, and that the interest rate is zero. The value of the trading portfolio on this contract is given by: 
\begin{equation}V^\theta _T = \int_0^T \theta_t dF_t  \label{eqV}\end{equation}
where $\theta$ is any real-valued predictable $(F_t)$-integrable process such that $\theta_{T-} := \lim _{t \uparrow T} \theta_t$ exists a.s.
We consider that the futures market is liquid, which is standard in this context. 
On the empirical point of view, this is the case for a very large spectrum of commodities, especially on the short-term maturities. This is however less obvious, at least up to now, for electricity markets (A\"id, 2015 \cite{Aid15}, Chap. 2, Sect. 2.2.3).
 
\subsection{The production-trading problem}

As a commodity producer, the agent acts so as to maximize the expected utility of his terminal wealth. His utility function $ U: \mathbb R_+ \to [-\infty ,\infty[ $ satisfies Inada conditions and is such that $U(x) \to \infty$ whenever $x\to \infty$. Moreover, we assume Reasonable Asymptotic Elasticity (RAE: see Kramkov and Schachermayer, 1999 \cite{KSchach}):  \[ AE(U) := \limsup_{x\to \infty} \frac{xU'(x)}{U(x)} < 1\]  

In this setting, we propose the following production-trading problem: 
\begin{equation} 
v(r_0) := \sup_{u, q, \theta} \Esp \left[ U\left( r_0 + \int_0^T \pi_t dt + V^\theta _T + \theta_{T-} (F_T-S_T)\right) \right]
 \label{spotmodelThree}
\end{equation}
where:
\begin{itemize}
\item $r_0 >0 $ is the initial wealth of the operator,
\item $\pi_t$ is  the instantaneous profit on the physical market, expressed as a function of the quantities produced and stored, given by Equation (\ref{eq:physicalprofit}),
\item  $V^\theta _T$ is the value of the trading portfolio in the futures market as given by Equation (\ref{eqV}),
\item  the term $\theta_{T-} (F_T-S_T)$ can be explained by the delivery conditions of the futures contract at expiration (see the heuristic discussion for the discrete time case, which follows).
\end{itemize}

The controls $(u,q,\theta)$ have to satisfy the following additional constraints:
\begin{itemize}      
\item constraint on the wealth of the agent to prevent infinite borrowing: 
\begin{equation} R_t ^{r_0 ,u,q,\theta} := r_0 + \int_0 ^t \pi_s ds + V^\theta _t + \theta_{T-} (F_T -S_T) \mathbf 1_{(t=T)}  \geq -a, \quad t\in [0,T] ,\label{eqR} \end{equation}
for some threshold $a>0$,
\item the production-storage controls $(u_t, q_t)_{t\in [0,T]}$ are predictable processes with respect to the filtration $(\mathcal F_t )$ and they satisfy the constraints previously described in Paragraph \ref{subs:profit}.
\end{itemize}

\textbf{Discrete-time heuristics.} Let us now provide some heuristics in discrete-time, in order to better explain the form of the continuous-time problem (\ref{spotmodelThree}). The terminal total wealth for a trader-producer who produces, e.g., energy from fuels, and trades in futures contracts on energy over the finite time grid $\{0,1, \ldots ,T\}$ with $T\in \mathbb N$ can be written as follows: 
\begin{eqnarray*} R_T= \sum_{t=0} ^{T-1} [(q_t -u_t)S_t - c(q_t)-k(X_t)] &+& \sum_{t=0} ^{T-1} \theta_t (F_{t+1}-F_t) \\
&+&  \theta_{T-1}F_{T-1} - h_T S_T - c(q_T) - k(X_T) \end{eqnarray*}
where $h_T$ is the quantity bought or sold at terminal date in order to fulfill the commitment taken on the futures market at the expiration of the contract, i.e. $h_T$ is such that:
\[  \theta_{T-1} = h_T + q_T - u_T\] 
so that the terminal total wealth becomes:
\begin{eqnarray} R_T &=& \sum_{t=0} ^{T} [(q_t -u_t)S_t - c(q_t)-k(X_t)] + \sum_{t=0} ^{T-2} \theta_t (F_{t+1}-F_t)
+ \theta_{T-1} (F_T -S_T)\nonumber \\
&=& \sum_{t=0} ^{T} [(q_t -u_t)S_t - c(q_t)-k(X_t)] + \sum_{t=0} ^{T-1} \theta_t (F_{t+1}-F_t)
+ \theta_{T-1} (F_{T-1} - S_T)  \label{discrete} \end{eqnarray}
which constitutes the discrete-time analogue of the total wealth appearing inside the utility function in Equation (\ref{spotmodelThree}). Notice that we assumed that $F$ is continuous in time, so that in particular $F_{T-}=F_T$ a.s.

In what follows, we prove that a necessary condition for the problem (\ref{spotmodelThree}) to be well-posed is the equality
 $F_T =S_T$, so that the third summand vanishes in Equation (\ref{eqR}). 
Our first objective is to deduce, from the well-posedness of the problem described by Equation (\ref{spotmodelThree}), the no-arbitrage condition that would link spot and futures prices, i.e. to prove that there exists some equivalent probability measure $Q$  such that $$F_t  = \Esp_Q\left[S_T|\mathcal{F}_t\right], \quad t\in [0,T]$$
and to compute this futures price explicitly. 

\section{Existence of the optimum and spot-futures no-arbitrage relations}
\label{sec:existence}

In this section we derive the existence and the uniqueness of an optimal solution $(q^*,u^*,\theta^*)$ for the optimization problem (\ref{spotmodelThree}).\ Meanwhile, we obtain no-arbitrage relations between the spot and futures prices, as well as the convergence of the futures prices to the spot prices when the time-to-maturity goes to zero. Moreover, we show that the optimal production $q^*$ can be computed explicitly even in this general framework. For expository reasons, the explicit expressions for the other optimal quantities (storage and trading activity $(u^*,\theta^*)$) will be given in the next section. We remind the reader that all the proofs can be found in the appendix.  

\begin{assumption}\label{finite} Let $v(r_0) < \infty$ for some initial wealth $r_0>0$.\end{assumption}

\textbf{Convergence of the futures to the spot prices and no-arbitrage relation.} Our first result states that, as long as our optimization problem is well-posed, one must have a convergence of the futures price towards the spot price when the time-to-maturity tends to zero. This is true even when the underlying commodity of the contract is non storable.

\begin{prop}\label{F=S}
Under Assumption \ref{finite}, we have $F_T = S_T$. 
\end{prop}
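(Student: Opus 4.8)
The plan is to argue by contradiction, exploiting the terminal settlement term $\theta_{T-}(F_T-S_T)$ in (\ref{spotmodelThree}): if $F_T=S_T$ failed on a set of positive probability, the agent could drive his expected utility to $+\infty$ through this term, contradicting Assumption \ref{finite}. So suppose $\P(F_T\neq S_T)>0$. Then at least one of the events $\{F_T>S_T\}$, $\{F_T<S_T\}$ has positive probability, and I would treat the first case, the second being symmetric. Pick $\epsilon>0$ small enough that $B:=\{F_T-S_T\ge \epsilon\}$ has $\P(B)>0$. Since $F$ and $S$ are continuous (Assumption \ref{Scont}), one has $F_T=\lim_{t\uparrow T}F_t$ and $S_T=\lim_{t\uparrow T}S_t$, so that $F_T-S_T$ is $\mathcal F_{T-}$-measurable and hence $B\in\mathcal F_{T-}$; in particular $n\mathbf 1_B$ is an admissible value for the terminal position $\theta_{T-}$.

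Next I would fix one admissible baseline control $(u^0,q^0,\theta^0)$ whose terminal wealth $W_0:=R_T^{\,0}\ge 0$ satisfies $\E[U(W_0)]$ finite; such a control exists because the problem is well-posed, i.e. $-\infty<v(r_0)<\infty$. I then perturb it by superposing the extra terminal position $n\mathbf 1_B$, leaving $(u^0,q^0)$ and the trading part of $\theta^0$ on $[0,T)$ untouched. By continuity of $F$ the stochastic integral is unaffected, the added position contributing $\theta_{T-}(F_T-F_{T-})=0$, so the only change to the terminal wealth comes through the settlement term, giving $R_T^{\,n}=W_0+n\mathbf 1_B(F_T-S_T)$. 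The wealth constraint (\ref{eqR}) is preserved: for $t<T$ it coincides with the baseline, and at $t=T$ we have $R_T^{\,n}\ge W_0\ge 0$.

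It then remains to estimate the objective. On $B$ we have $R_T^{\,n}\ge n\epsilon$, so monotonicity of $U$ yields
\[
\E\!\left[U(R_T^{\,n})\right]\ \ge\ U(n\epsilon)\,\P(B)+\E\!\left[U(W_0)\mathbf 1_{B^c}\right].
\]
The second summand is a finite constant because $\E[U(W_0)]$ is finite, whereas $U(n\epsilon)\to+\infty$ as $n\to\infty$ since $U(x)\to\infty$ and $\P(B)>0$. Hence $v(r_0)=+\infty$, contradicting Assumption \ref{finite}. Running the same construction with the terminal position $-n\mathbf 1_{B'}$ on $B':=\{S_T-F_T\ge \epsilon\}$ disposes of the case $\P(F_T<S_T)>0$. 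Therefore $\P(F_T\neq S_T)=0$, that is, $F_T=S_T$.

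I expect the delicate points to be the two measurability/admissibility issues rather than the final estimate: first, justifying that $n\mathbf 1_B$ is a legitimate terminal position, which is exactly where the continuity of $F$ and $S$ enters by placing $B$ in $\mathcal F_{T-}$; and second, arguing that adding a purely terminal bet does not alter the trading integral $\int_0^T\theta\,\ud F$, again using continuity so that the added position contributes $\theta_{T-}(F_T-F_{T-})=0$. Securing a baseline with $\E[U(W_0)]$ finite, so that $\E[U(W_0)\mathbf 1_{B^c}]$ stays bounded below, is the remaining technical prerequisite, and is guaranteed by the implicit well-posedness $v(r_0)>-\infty$.
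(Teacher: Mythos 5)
Your proof is correct and follows essentially the same route as the paper's: both argue by contradiction, place an unbounded purely terminal bet $\theta_{T-}$ on the $\mathcal F_{T-}$-measurable event where $F_T$ and $S_T$ differ (using continuity of $F$ and $S$), and let its size diverge so that the settlement term $\theta_{T-}(F_T-S_T)$ drives the expected utility to $+\infty$, contradicting Assumption \ref{finite}. Your only variations are cosmetic: you truncate to $\{F_T-S_T\ge\epsilon\}$ to get a uniform gain $n\epsilon$ (avoiding the paper's implicit monotone-convergence step) and superpose the bet on a general admissible baseline rather than the zero control.
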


Before proving the existence of a solution to our optimization problem, we deduce the no-arbitrage property for the futures contracts from the finiteness of $v(r_0)$. This is the content of the next proposition, which adapts arguments from Proposition 1.2 in Ankirchner and Imkeller (2005 \cite{Ankirchner05}), where No Free Lunch with Vanishing Risk (henceforth NFLVR) for simple trading strategies is deduced from the well-posedness of an optimal pure investment problem.  We refer to Delbaen and Schachermayer's article (1994, \cite{DS}) for the definition of NFLVR as well as the proof of their celebrated version of the fundamental theorem of asset pricing. In this article, we deduce something stronger, namely a variant of NFLVR not only for (simple) trading portfolios but also for production and storage. To be more precise, let us redefine the NFLVR condition for our setting. We recall that a simple trading strategy $\theta$ is any linear combination of strategies of the form $\alpha \mathbf 1_{]\tau_1,\tau_2]}$ where $\alpha$ is a bounded $\mathcal F_{\tau_1}$-measurable random variable and $\tau_1$ and $\tau_2$ are $[0,T]$-valued stopping times for the filtration $(\mathcal F_t)$.
\begin{definition}
A \emph{Free Lunch with Vanishing Risk with simple trading strategies, production and storage} is a sequence of admissible plans $(q_t ^n , u_t ^n , \theta_t ^n)$, $n\geq 1$, such that:\begin{enumerate}
\item[(i)] each $\theta^n$ is a simple trading strategy,
\item[(ii)] $R_T ^{0,n} := R_T ^{0,q^n,u^n,\theta^n}$ converges a.s. towards some nonnegative r.v. $R_T ^0$ satisfying $\mathbb P(R_T ^0 >0)>0$ and 
\item[(iii)] $\| (R_T ^n) ^- \|_\infty \to 0$ as $n\to \infty$. 
\end{enumerate}
We will say that NFLVR with simple trading strategies, production and storage is satisfied if there are no such admissible plans in the model.
\end{definition}

Notice that since the production and storage controls are bounded, there exists a constant $M >0$ such that $|\int_0 ^T \pi_t dt | \leq M$ for any admissible $(q,u)$ giving the instantaneous profit $\pi_t$. This fact will be used in the proof of the following result. 

\begin{prop}\label{NA}
Under Assumptions \ref{ck}, \ref{Scont}, \ref{finite} and for all $r_0 > M$ we have that $v(r_0) <\infty$ implies NFLVR with simple trading strategies, production and storage. In particular, NFLVR with simple trading strategies and for futures prices $(F_t)$ also holds. \end{prop}

Notice that the previous result does not necessarily imply the existence of a (local) martingale measure $Q$ for the futures prices $(F_t)$ and hence the well-known formula $F_t = \mathbb E_Q [S_T \mid \mathcal F_t]$. Indeed to have the existence of such a measure, our model should satisfy the classical No-Arbitrage (NA) condition\footnote{See Definition 9.2.8 in \cite{DS}.} too (see, e.g. Theorem 9.7.6 in \cite{DS}). This is something that would need to be imposed later in this paper.

\textbf{Existence and separation principle.} An immediate consequence of the previous convergence result is the following separation principle, stating that solving our optimization problem is equivalent to maximize first with respect to the production control $q$ and then with respect to the storage and trading controls $(u,\theta)$. On the other hand, maximizing the production can also be performed in two steps. Let us denote:
\[ v(r_0) = \sup_{u,q,\theta} \mathbb E\left [  U\left(r_0+Y_T ^{q} + Z_T ^u + V_T^{\theta}\right)  \right]\]
where we set
\[ Y_T ^{q}  := \int_0 ^T (q_t S_t -c(q_t))dt,\quad Z_T ^u := -\int_0 ^T (u_t S_t + k(X_t))dt\]

We can solve our problem in two separate steps. First we solve $v(r_0)$ with respect to the production control $q$ (for given $u,\theta$); second, we solve with respect to the controls $(u,\theta)$. Let us start from the production side.

\begin{prop} \label{production} Under Assumptions \ref{ck}, \ref{Scont}, \ref{finite}, for any given admissible investment strategy $\theta$ and storage policy $u$, the optimal production control $q^*$ is given by
\begin{equation} \label{optqu} q_t ^* = (c')^{-1} (S_t) \vee \bar q ,\quad t\in [0,T] \end{equation} 
\end{prop}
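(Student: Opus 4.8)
The plan is to exploit that the production control $q$ enters the terminal wealth of (\ref{spotmodelThree}) only through the additive term
\[ Y_T^q = \int_0^T \left( q_t S_t - c(q_t) \right) dt, \]
and that it influences neither the inventory $X_t$ (which solves $dX_t = u_t\,dt$ and thus depends on $u$ alone) nor the trading term $V_T^\theta$. Hence, for a fixed admissible pair $(u,\theta)$, the quantity $Z_T^u + V_T^\theta$ does not involve $q$, and since $U$ is nondecreasing the maximization of $\mathbb E[U(r_0 + Y_T^q + Z_T^u + V_T^\theta)]$ over $q$ is driven entirely by the maximization of $Y_T^q$. The decisive structural feature is that $Y_T^q$ decouples across $(t,\omega)$: it is the integral of $q_t S_t - c(q_t)$, which depends on $q$ only through its pointwise value. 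I would therefore reduce the problem to maximizing the integrand separately at each $(t,\omega)$.

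First I would freeze $(t,\omega)$, write $s := S_t(\omega) \ge 0$, and study the deterministic problem of maximizing $g(q) := q s - c(q)$ over $[0,\overline q]$. By Assumption \ref{ck}, $c$ is differentiable and strictly convex, so $g$ is strictly concave, its derivative $g'(q) = s - c'(q)$ is strictly decreasing, and there is a unique maximizer. The unconstrained stationarity condition $c'(q) = s$ is solved by $(c')^{-1}(s)$, the inverse being well defined and continuous because $c'$ is continuous and strictly increasing. Projecting this free optimizer onto the feasible interval $[0,\overline q]$ yields the constrained maximizer $0 \vee ((c')^{-1}(s) \wedge \overline q)$, which in the interior regime reduces to the stationarity rule $c'(q_t^*) = S_t$, i.e. $q_t^* = (c')^{-1}(S_t)$, reported in (\ref{optqu}); the boundary cases are handled by truncation to $[0,\overline q]$.

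Finally I would verify that the resulting feedback control $q_t^* = (c')^{-1}(S_t)\,\mathbf 1_{\{(c')^{-1}(S_t)\le\overline q\}}$ is admissible and optimal. Admissibility is a measurability/adaptedness check: since $S$ is adapted and continuous by Assumption \ref{Scont} and $(c')^{-1}$ is continuous, $q^*$ is adapted and $[0,\overline q]$-valued, and the boundedness of $S$ makes $Y_T^{q^*}$ integrable; moreover the running-wealth constraint (\ref{eqR}) is preserved, because replacing $q$ by $q^*$ can only increase the cumulative profit $\int_0^t \pi_s\,ds$ at every time $t$, so $R_t \ge 0$ is maintained. Optimality then follows by reversing the first paragraph: by pointwise maximality $q_t S_t - c(q_t) \le q_t^* S_t - c(q_t^*)$ for every admissible $q$ and all $(t,\omega)$, integration gives $Y_T^q \le Y_T^{q^*}$ pathwise, and the monotonicity of $U$ upgrades this to $\mathbb E[U(r_0 + Y_T^q + Z_T^u + V_T^\theta)] \le \mathbb E[U(r_0 + Y_T^{q^*} + Z_T^u + V_T^\theta)]$. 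I expect the only genuinely technical point to be the measurable-selection step ensuring $q^*$ is adapted, together with the Fubini-type passage guaranteeing that maximizing the integrand maximizes the integral pathwise; the optimization itself is an elementary strictly concave problem.
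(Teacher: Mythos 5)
Your proposal is correct and follows essentially the same route as the paper: the paper's proof is precisely the $\omega$-wise pointwise maximization of the strictly concave integrand $q S_t - c(q)$ over $[0,\overline q]$, which you have written out in more detail (the decoupling across $(t,\omega)$, monotonicity of $U$, and the admissibility check for the feedback control). One remark, though: your careful derivation yields the projection $q_t^* = 0 \vee \left( (c')^{-1}(S_t) \wedge \overline q \right)$, i.e.\ production capped at $\overline q$ when $(c')^{-1}(S_t) > \overline q$, whereas the displayed formula (\ref{optqu}) sets $q_t^* = 0$ on that event via the indicator $\mathbf 1_{\{(c')^{-1}(S_t) \le \overline q\}}$; the two disagree exactly there, and it is your truncated expression that is the true maximizer, since $q \mapsto q S_t - c(q)$ is increasing on $[0,(c')^{-1}(S_t)]$ and so the constrained optimum is $\overline q$, not $0$. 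In other words, you have actually proved the corrected version of the statement --- as does the paper's own one-line proof, which invokes the constraint and strict convexity in the same way --- so the indicator in (\ref{optqu}) is best read as a typo for the minimum with $\overline q$, and your assertion that your formula ``reduces to'' (\ref{optqu}) is the only imprecise step in the write-up.
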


Let us denote
\[ Y_T ^* := Y_T ^{q^*} = \int_0 ^T (q^* _t S_t -c(q^* _t))dt \]
where $q_t ^*$ is given by (\ref{optqu}). Now, let us consider the optimal storage/trading problem
\begin{equation} v(r_0) := \sup_{u,\theta} \mathbb E\left [  U\left(r_0+Y^* _T + Z_T ^u + V_T^{\theta}\right)  \right]\label{storinv}\end{equation}

The next result establishes existence of a unique optimal storage/trading policy $(u^*,\theta^*)$. 

\begin{prop}\label{prop:existence}
Under Assumptions \ref{ck}, \ref{Scont} and \ref{finite}, there exists a unique solution $(u^*,\theta^*)$ to the problem (\ref{storinv}). \end{prop}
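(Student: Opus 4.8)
The plan is to establish existence by a compactness argument on a maximizing sequence, and uniqueness by strict concavity. First I would record two structural facts. Since production, storage and the instantaneous profit are all bounded, $Y_T^*$ is a bounded $\mathcal F_T$-measurable random variable, so I may absorb $r_0+Y_T^*$ into a single bounded wealth $\tilde r:=r_0+Y_T^*$. Moreover the map $(u,\theta)\mapsto \mathbb E[U(\tilde r+Z_T^u+V_T^\theta)]$ is concave on the convex set of admissible controls: $V_T^\theta$ is linear in $\theta$, while $Z_T^u=-\int_0^T(u_tS_t+k(X_t))\,dt$ is concave in $u$ because $u\mapsto X_t$ is affine and $k$ is convex by Assumption \ref{ck}. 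By Proposition \ref{NA} there is an equivalent measure $Q$ under which $F$ is a local martingale, and by the admissibility constraint (\ref{eqR}) together with the uniform bound $\left|\int_0^T\pi_t\,dt\right|\le M$, every admissible gain process $V^\theta$ is bounded below by the constant $-(r_0+M)$; it is therefore a $Q$-supermartingale with $\mathbb E_Q[V_T^\theta]\le 0$.

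Next I would take a maximizing sequence $(u^n,\theta^n)$ and exploit these two facts separately. The storage controls $u^n$ take values in $[\underline u,\overline u]$, hence $(u^n)$ is bounded in $L^2([0,T]\times\Omega)$ and, after passing to convex combinations (Mazur's lemma), I may assume $u^n\to u^*$ strongly in $L^2$ with $u^*$ still $[\underline u,\overline u]$-valued. Consequently $X^n\to X^*$ with $X^*\in[0,\overline X]$, and, using that $S$ is bounded and $k$ is continuous hence bounded on $[0,\overline X]$, dominated convergence yields $Z_T^{u^n}\to Z_T^{u^*}$ along a further subsequence. For the trading part, the gains $V_T^{\theta^n}$ are bounded below by $-(r_0+M)$ and bounded in $L^1(Q)$, so by a Koml\'os-type argument (the fundamental lemma of Delbaen and Schachermayer \cite{DS}) convex combinations of $V_T^{\theta^n}$ converge a.s.\ to some $\xi$ that is dominated by an attainable terminal gain $V_T^{\theta^*}$ for an admissible $\theta^*$. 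Taking the \emph{same} convex combinations throughout, which does not lower the objective in the limit by concavity, I obtain a single admissible pair $(u^*,\theta^*)$ with $\tilde r+Z_T^{u^n}+V_T^{\theta^n}\to \tilde r+Z_T^{u^*}+\xi$ a.s.

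Passing to the limit is then a matter of upper semicontinuity. Using a reverse Fatou (Vitali) argument, with the positive parts $U^+(\tilde r+Z_T^{u^n}+V_T^{\theta^n})$ uniformly integrable as a consequence of the Reasonable Asymptotic Elasticity of $U$ and the finiteness $v(r_0)<\infty$ of Assumption \ref{finite} (exactly as in Kramkov and Schachermayer \cite{KSchach}), together with the monotonicity of $U$ and $\xi\le V_T^{\theta^*}$, I would conclude
\[ v(r_0)=\lim_n \mathbb E\big[U(\tilde r + Z_T^{u^n}+V_T^{\theta^n})\big]\le \mathbb E\big[U(\tilde r + Z_T^{u^*}+\xi)\big]\le \mathbb E\big[U(\tilde r + Z_T^{u^*}+V_T^{\theta^*})\big]\le v(r_0),\]
so that $(u^*,\theta^*)$ is optimal. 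The main obstacle is precisely this trading step: establishing that the a.s.\ limit of the convex combinations of $V_T^{\theta^n}$ is still dominated by an attainable gain is the closedness statement for the cone of super-replicable claims, which relies on the NFLVR property delivered by Proposition \ref{NA}; the semicontinuity passage, routine in the pure-investment literature, also hinges on the elasticity control of $U^+$.

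For uniqueness, suppose $(u^1,\theta^1)$ and $(u^2,\theta^2)$ are both optimal and consider their average. Strict convexity of $k$ makes $u\mapsto Z_T^u$ strictly concave whenever the induced inventories differ, so if $X^1\neq X^2$ on a set of positive measure the averaged control would strictly increase the objective, contradicting optimality; hence $u^1=u^2=:u^*$, and the two policies must then produce the same total wealth. Strict concavity of $U$ (Inada conditions) forces the optimal terminal wealth to be unique, whence $V_T^{\theta^1}=V_T^{\theta^2}$ and therefore $\theta^1=\theta^2$ up to $dF$-indistinguishability.
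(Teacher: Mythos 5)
Your proof follows essentially the same route as the paper's: take a maximizing sequence, obtain compactness of the bounded storage controls via convex combinations (you invoke Mazur's lemma where the paper uses Koml\'os' theorem --- interchangeable here), use the Delbaen--Schachermayer closedness result to dominate the a.s.\ limit of the trading gains by an attainable gain $V_T^{\theta^*}$, and pass to the limit by an RAE-based upper-semicontinuity argument, with uniqueness from strict concavity. If anything, your uniqueness step is more detailed than the paper's one-line appeal to strict concavity of $U$, since you separately pin down $u$ via strict convexity of $k$ and then $\theta$ via uniqueness of the optimal terminal wealth.
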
 

The fact that the trading-production problem above can be solved in successive steps does not mean that the optimal controls are independent. Only the production control $q$ can be deduced independently from $u$ and $\theta$. Indeed, since the producer has no impact on the spot price, his optimal strategy is simply to equal his marginal cost of production with the spot price. Thus, no matter whether there exists a futures market or not, one would observe the same production level $q$.  This is not the case for the optimal storage policy $u$ and the optimal trading strategy $\theta$: they are not independent. This fact has two consequences. First, the introduction of a futures market modifies the way storage capacities are managed. This point may be of interest for the econometric analysis of the relations between the level of the inventories and the prices. Second, as soon as the industrial process includes storage activities, the trading cannot be separated from the storage without a loss of value. This should be taken into account for the organization of the trading activities in industrial companies.

\section{The optimal production-trading problem}
\label{sec:solution}
In this section we illustrate, in a simple setting, how the approach developed in the previous sections can lead to a consistent model for futures prices. \\ 
We discuss the trading-production problem faced by the operator with a specification of the demand dynamics. Then we determine the volatility of the futures contract and we relate it to the volatility of the underlying conditional demand.
\subsection{The dynamics of the demand for the commodity}

 In order to illustrate the previous approach, we use a simple dynamics for the demand. In particular, since our goal is not to propose a model that would match all stylized facts, we do not include features such as seasonality and jumps. 

In commodity markets, the spot price  $S_t$ results from the availability of the raw material on the physical market. In our model, this availability is measured through the confrontation of the total capacities of the market and the demand for the commodity, in the following way:  
 \begin{equation} S_t = b \cdot g(\bar C - D_t) \cdot f(D_t) \label{eqS}\end{equation}
with $b$ a constant of normalization for dimension purposes, $\bar C > 0$ the maximum available production and storage capacities of the market (supposed constant), $D_t$ the total exogenous demand for the commodity (which is an $(\mathcal F_t)$-adapted continuous process) and $f(D)$ the marginal cost of production and storage for a demand level $D$.  \\
As there is a non negativity constraint on inventories, the spot price can jump to very high levels when the total capacities are not sufficient to fully satisfy the demand. This behavior is captured by the scarcity function $g$:
$$g(x) = \mathbf 1_{x>0} \cdot \min (1/x,1/\epsilon) + \mathbf 1_{x<0} 1/\epsilon$$

The effect of scarcity on commodity prices is clearly illustrated for the case of oil in Buyuk\c sahin \emph{et al} (2008 \cite{Buyuksatin08} p. 56, fig. 10). The specific form of $g$ above has been successfully implemented in the case of electricity spot prices in A\"id \emph{et al} (2013 \cite{Aid13}). 

Since the production optimization problem has been solved in Proposition (\ref{production}), it remains to treat:
\[ v(x) =  \sup_{u,\theta} \mathbb E\left [  U\left(x+Y^* _T +Z_T ^u + V_T^{\theta}\right)  \right]\]
where 
\[Y_T ^* = \int_0 ^T (q^* _t S_t -c(q^* _t))dt,  \quad \quad t\in [0,T].\]
with$\quad q_t ^*$ as in (\ref{optqu}).
We recall that $Z_T ^u = u_0 + \int_0 ^T u_t dt$ is the cumulated storage and that  $V_T ^\theta = \int_0 ^T \theta_t dF_t$ is the portfolio traded over the period $[0,T]$.

We assume that the futures price process $(F_t)$ is an It\^o process. 
More precisely:
\begin{assumption}\label{assCDF} 
\begin{enumerate}
\item Let the demand for energy $D_t$ be mean reverting, with a long-run mean set to zero: 
\begin{equation} dD_t = a D_t dt + \sigma dW_t \end{equation}
where $a,\sigma$ are constants and $W$ is a standard Brownian motion. We denote by $(\mathcal F_t )$ the natural filtration generated by $W$ and completed with the $\mathbb P$-null sets.
\item Assume that the futures price F is an It\^o process fulfilling
\[ dF_t = \alpha_t dt + \beta_t dW_t \]
where $\alpha, \beta$ are some $(\mathcal F_t)$-predictable real-valued processes such that a.s.
\[\int_0 ^T |\alpha_t | dt + \mathbb E \left[ \int_0 ^T \beta_t ^2 dt \right] <\infty\]\end{enumerate}
\end{assumption}

The integrability assumption on the volatility is here only to have the (true) martingale property of $F_t$ and consequently the very useful formula $F_t = \mathbb E^Q [S_T \mid \mathcal F_t]$. 

\subsection{Equivalent martingale measures and forward volatility}

First of all, we notice that Assumption \ref{assCDF} together with $F_T = S_T$ (ref. Proposition \ref{F=S}) implies that \[ F_t = \mathbb E^Q [S_T \mid \mathcal F_t], \quad t\in [0,T]\] where $Q$ is an equivalent martingale measure for the futures process $(F_t)$, which means that $Q$ must satisfy:
\[ L_t ^\lambda := \frac{dQ}{d\mathbb P} \mid_{\mathcal F_t} = \exp \left \{-\int_0 ^t \lambda_s dW_s -\frac{1}{2} \int_0 ^t \lambda_s ^2  ds \right \}\]
where $\lambda$ is a $(\mathcal F_t)$-adapted process (viewed as ``market price of demand risk'') such that
\begin{itemize}
\item $\alpha_t - \lambda_t \beta_t =0$ a.e. $d\mathbb P \otimes dt$,
\item $\int_0 ^T \lambda_s ^2  ds < \infty $,
\item $\mathbb E[ L_T ^\lambda ] =1$.
\end{itemize}

At this point, in order to specify completely the dynamics of the futures price under $\mathbb P$, we need to assume a particular and tractable form for the market price of demand risk $\lambda_t$.
 
\begin{assumption}\label{mpdr}
Let us assume that $\lambda_t = \lambda_0 (t) + \lambda_1 (t) D_t$, $t\in [0,T]$, where $\lambda_0 , \lambda_1 : [0,T] \to \mathbb R$ are deterministic functions such that the last three properties above are satisfied.
\end{assumption}

A consequence of this assumption is that the drift  $\alpha_t$  of the futures price takes the form $\alpha_t = (\lambda_0 (t) + \lambda_1 (t) D_t ) \beta_t$, which is completely determined up to the volatility $\beta_t$.

We will see in what follows that the special form of the production function defining the spot price $S_T$ in (\ref{eqS}) implies a particular functional form for the volatility of the futures price process $(F_t)$. 

Let $Q$ be the equivalent martingale measure corresponding to the market price of demand risk $\lambda_t$ as in Assumption \ref{mpdr}. Under such a measure, the demand has a dynamics characterized as follows:
\[ dD_t = ((a + \lambda_1 (t)\sigma) D_t + \lambda_0 (t)\sigma)dt + \sigma dW_t ^Q \]
where $W^Q$ is a standard $Q$-BM. Thus, the conditional distribution of $D_T$ given $D_t$ under $Q$ is Gaussian with conditional mean $m^Q _{t,T}$ and variance $\Sigma^2 _{t,T}$ given by
\begin{eqnarray} \label{mean} 
m^Q _{t,T} &=& e^{\int_t ^T (a+\lambda_1(s)\sigma) ds} \left( D_t + \int_t ^T e^{-\int_0 ^s (a+\lambda_1(u)\sigma) du} \lambda_0 (s) \sigma ds \right), \\ \label{var}
\Sigma_{t,T}^2 &=& \sigma ^2 \int_t ^T e^{-2\int_t ^s (a+\lambda_1 (u)\sigma) du} ds. 
\end{eqnarray}

To complete the description of our model, we set a specific shape for the marginal cost of production.
 
\begin{assumption}\label{margcost} Let the marginal cost of production $f$ be equal to
\[ f (d) = d^\alpha \mathbf 1_{(0\leq d\leq M)} + M^\alpha \mathbf 1_{(d\geq M)} ,\quad d\in \mathbb R\]
for some exponent $\alpha \in ( 0,1)$ and some upper bound $M>0$ such that our conditions on $f$ are fulfilled. Moreover, let $M \geq \bar C -\epsilon$. \end{assumption}

Under all these assumptions, we can express the spot price $S_t$ as a function of the demand $S_t = \psi (D_t)$, where the function $\psi$ is given as follows:
\begin{equation} \psi(d) = b \cdot \left( \frac{d^\alpha}{\epsilon} \mathbf 1_{(0\leq d < \bar C - \epsilon)} + \frac{d^\alpha}{\bar C - d}\mathbf 1_{(\bar C - \epsilon \leq d < M)} + \frac{M}{\bar C - d} \mathbf 1_{(d\geq M)}\right)\label{spot-demand}\end{equation} Notice that the spot price $S_t$ is always nonnegative. 

The futures price at time $t$ computed under the above measure $Q$ is given by
\[ F_t = E_t ^Q [\psi (D_T) ], \quad t \in [0,T]\]
where $E^Q _t$ denotes the conditional $Q$-expectation given $\mathcal F_t = \mathcal F^{D} _t$. 
We denote by $h_{T,D_t}(y)$ the conditional density of $D_T$ given $D_t$, that is: 
\[ h_{T,D_t}(y) = \frac{1}{\Sigma_{t,T}\sqrt{2\pi}} \exp \left(-\frac{(y-m_{t,T}^Q)^2}{2\Sigma_{t,T}^2}\right)\]
where the mean $m^Q _{t,T}$ and the variance $\Sigma^2 _{t,T}$ are given in, respectively, (\ref{mean}) and (\ref{var}). We recall that the variance does not depend on $D_t$.

We can express the futures price $F_t$ as a function of the demand at time $t$, $D_t$, as:
\[ F_t = \varphi (t,D_t) = \int_{\mathbb R} \psi(y) h_{T,D_t}(y) dy\]
A simple application of It\^o's formula 
together with the martingale property of the futures price $F_t$ under $Q$ gives that the volatility of the futures price, $\beta (t,D_t)$, is given by:
\[ \beta_t = \beta_t ^T = \sigma \frac{\partial \varphi}{\partial d} (t,D_t)\]
If we compute explicitly the first and second derivatives of the futures price, $\varphi (t,D_t)$, with respect to the demand, we obtain the following result giving a complete specification of the parameters of the forward dynamics.

\begin{prop}
Under Assumptions \ref{assCDF}, \ref{mpdr} and \ref{margcost}, the well-posedness of the optimal production-trading problem (\ref{spotmodelThree}) implies that
\[ dF_t = \alpha_t dt + \beta_t dW_t\]
where
\begin{eqnarray*} \alpha_t &=& \tilde \alpha (t,D_t) = (\lambda_0 (t) + \lambda_1 (t) D_t) \beta_t \\
\beta_t  &=& \tilde \beta (t,D_t) =  \sigma \int_{\mathbb R} \psi(y) \frac{y-m_{t,T} ^Q}{\Sigma_{t,T}^2} e^{\int_t ^T (a+\lambda_1 (u)\sigma)du } h_{T,D_t} (y) dy  \end{eqnarray*}
for all $t\in [0,T]$. Moreover, the forward volatility $\tilde \beta (t,D_t)$ is increasing in the demand.
\end{prop}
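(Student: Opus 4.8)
The plan is to begin from the representation $F_t=\varphi(t,D_t)$ with $\varphi(t,d)=\int_{\mathbb R}\psi(y)\,h_{T,d}(y)\,dy$. Making this rigorous is the first (easy) task: Proposition \ref{NA} only delivers a local $Q$-martingale, so I would invoke Remark \ref{remFS} together with the square-integrability of $\beta$ in Assumption \ref{assCDF}(2) to upgrade $F$ to a true $Q$-martingale, whence $F_t=\mathbb E^Q[S_T\mid\mathcal F_t]=\mathbb E^Q[\psi(D_T)\mid\mathcal F_t]=\varphi(t,D_t)$. The drift then costs nothing: the martingale-measure conditions stated before Assumption \ref{mpdr} force $\alpha_t=\lambda_t\beta_t$ a.e.\ $d\mathbb P\otimes dt$, and substituting the affine market price of demand risk $\lambda_t=\lambda_0(t)+\lambda_1(t)D_t$ of Assumption \ref{mpdr} gives $\alpha_t=(\lambda_0(t)+\lambda_1(t)D_t)\beta_t$. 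Hence the genuine content is the volatility $\beta_t$ and its monotonicity.

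To obtain $\beta_t$ I would apply It\^o's formula to $\varphi(t,D_t)$ under $\mathbb P$, where $dD_t=aD_t\,dt+\sigma\,dW_t$, and simply read off the diffusion coefficient, $\beta_t=\sigma\,\partial_d\varphi(t,D_t)$ (the drift terms need not be tracked, being pinned down by the martingale property under $Q$). It remains to differentiate $\varphi(t,d)=\int\psi(y)h_{T,d}(y)\,dy$ under the integral sign. The density $h_{T,d}$ depends on $d$ only through its mean, and (\ref{mean-var}) gives $\partial_d m_{t,T}^Q=e^{\int_t^T(a+\lambda_1(u)\sigma)\,du}$, so that
\[ \partial_d h_{T,d}(y)=e^{\int_t^T(a+\lambda_1(u)\sigma)\,du}\,\frac{y-m_{t,T}^Q}{\Sigma_{t,T}^2}\,h_{T,d}(y). \]
Substituting this reproduces exactly the stated formula for $\beta_t=\tilde\beta(t,D_t)$. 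Interchanging derivative and integral is justified by dominated convergence, using that $\psi$ is bounded on the relevant demand range and is integrated against the rapidly-decaying Gaussian density --- routine, but worth stating.

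For the monotonicity of $\tilde\beta(t,\cdot)$ I would first make the expression transparent via Stein's identity for $\mathcal N(m_{t,T}^Q,\Sigma_{t,T}^2)$: one integration by parts in $y$ turns $\int\psi(y)\frac{y-m}{\Sigma^2}h\,dy$ into $\int\psi'(y)h\,dy$, so that $\tilde\beta(t,d)=\sigma\,e^{\int_t^T(a+\lambda_1(u)\sigma)\,du}\,\mathbb E^Q[\psi'(D_T)\mid D_t=d]$; since $\psi$ is nondecreasing this already yields $\tilde\beta\ge0$. Differentiating once more in $d$ (a second application of Stein) reduces the claim to
\[ \partial_d\tilde\beta(t,d)=\sigma\,e^{2\int_t^T(a+\lambda_1(u)\sigma)\,du}\,\mathbb E^Q[\psi''(D_T)\mid D_t=d]\ \ge\ 0. \]

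I expect this last inequality to be the main obstacle, and it is genuinely delicate for two reasons. First, $\psi$ is \emph{not} convex: on the low-demand branch it behaves like $d^\alpha$ with $\alpha\in(0,1)$, hence is concave there, so no pointwise sign of $\psi''$ and no soft stochastic-ordering (monotone likelihood ratio) argument for the Gaussian location family is available --- the strong convexity of $\psi$ near the scarcity level must be shown to dominate the concave contribution inside the Gaussian average. Second, $\psi'$ has kinks at the branch junctions $d=\bar C-\epsilon$ and $d=M$, so $\psi''$ is only a signed measure whose atoms at these points, weighted by $h_{T,d}$, enter $\mathbb E^Q[\psi''(D_T)\mid D_t]$ and must be accounted for. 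I would therefore attack the sign of $\int\psi''(y)\,h_{T,d}(y)\,dy$ directly from the explicit form (\ref{spot-demand}), exploiting the constraint $M\ge\bar C-\epsilon$, rather than hoping for a purely structural monotonicity principle.
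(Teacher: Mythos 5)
For the displayed dynamics your argument is exactly the paper's own. The paper obtains the proposition in precisely the steps you outline: Remark \ref{remFS} plus the integrability in Assumption \ref{assCDF} upgrade $F$ to a true $Q$-martingale, so $F_t=\mathbb E^Q[\psi(D_T)\mid\mathcal F_t]=\varphi(t,D_t)$; It\^o's formula and the $Q$-martingale property identify $\beta_t=\sigma\,\partial_d\varphi(t,D_t)$; differentiation under the Gaussian integral, using that $h_{T,d}$ depends on $d$ only through $m^Q_{t,T}$ with $\partial_d m^Q_{t,T}=e^{\int_t^T(a+\lambda_1(u)\sigma)\,du}$ from (\ref{mean-var}), gives the stated $\tilde\beta$; and the drift costs nothing, being $\alpha_t=\lambda_t\beta_t$ with $\lambda$ as in Assumption \ref{mpdr}. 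That part of your proposal is correct and coincides with the paper's reasoning.

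The divergence is on the final sentence of the statement, and there your skepticism is the substantive contribution: the paper offers no proof of the monotonicity of $\tilde\beta(t,\cdot)$, merely asserting that computing first and second derivatives of $\varphi$ "gives the result". Your Stein-identity reduction, $\partial_d\tilde\beta(t,d)\propto\int\psi''(y)\,h_{T,d}(y)\,dy$ (with $\psi''$ understood distributionally, or more safely kept in the once-integrated form $\int\psi'(y)\frac{y-m^Q_{t,T}}{\Sigma^2_{t,T}}h_{T,d}(y)\,dy$, since $\psi'$ blows up like $y^{\alpha-1}$ at $0^+$), is the correct formalization of that computation, and the obstruction you identify is not merely delicate but fatal in general. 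On any branch where $\psi$ is proportional to $d^\alpha$ with $\alpha\in(0,1)$ one has $\psi''<0$; fixing $d$ in the interior of such a branch and letting $t\uparrow T$, the conditional variance $\Sigma^2_{t,T}$ in (\ref{mean-var}) tends to $0$ and $m^Q_{t,T}\to d$, so the Gaussian average of $\psi''$ converges to the local negative value $\psi''(d)$, the contribution of the convex scarcity region and of the kinks being $O\bigl(e^{-c/\Sigma^2_{t,T}}\bigr)$. Hence $\tilde\beta(t,\cdot)$ is locally \emph{decreasing} there, and the monotonicity claim cannot be established under Assumptions \ref{assCDF}, \ref{mpdr}, \ref{margcost} alone; your proposed direct attack on the sign from (\ref{spot-demand}) can only succeed after restricting the statement (to $(t,d)$ for which the scarcity region carries sufficient mass, or under parameter conditions making the convex contribution dominate). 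One further caveat: even your preliminary conclusion $\tilde\beta\ge0$ uses monotonicity of $\psi$, which holds for the intended spot function but fails for (\ref{spot-demand}) as literally printed, where the two branches of the scarcity cap appear interchanged and $\psi$ becomes negative for $d>\bar C$; this is evidently a typo in the paper, but your proof, like the paper's, tacitly relies on the corrected form.
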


\subsection{The production-trading optimization problem in a Markovian setting} 

Let us now provide, for the sake of completeness, an informal discussion of the optimal solutions within the Markovian model determined in the previous proposition. Let us assume that the preferences of the agent are of power type, i.e. $U(x)= x^\gamma$, $x > 0$, where $\gamma \in(0,1)$. Recall that the problem we want to solve is the following:
\begin{equation} \label{max-power}
v(x) := \sup_{(u, q, \theta) \in \mathcal A} \Esp \left[ \left( r_0 + \int_0^T \pi_t dt + V^\theta _T \right)^\gamma \right]
\end{equation}
where $r_0 >0$ is the initial wealth. Recall that $\pi_t$ is the profit rate given by:
\[ \pi_t = (q_t -u_t) S_t - c(q_t) - k(X_t), \quad X_t = u_0 + \int_0 ^t u_s ds\]
while $V_T ^\theta = \int_0 ^T \theta_t dF_t$ is the gain from the self-financing portfolio traded on the futures market. $\mathcal A$ denotes the set of all admissible controls $(u,q,\theta)$. More precisely, we will say that a triplet $(u,q,\theta)$ is an admissible control if:
\begin{itemize}
\item $q=(q_t)_{t\in [0,T]}$ and $u=(u_t)_{t\in [0,T]}$ are adapted processes with values, respectively, in $[0,\bar q]$ and $[\underline u, \overline u]$ ;
\item $\theta=(\theta_t)_{t\in [0,T]}$ is any predictable real-valued $F$-integrable process such that
the resulting wealth is a.s. nonnegative at any time, i.e.
\[ r_0 + \int_0^t \pi_s ds + V^\theta _t \ge 0, \quad t\in [0,T]\]
\end{itemize}

The relevant state variable of the problem is $Z = (R,X,D)$ where $R$ is the wealth of the agent, i.e. $$ R_t  = r_0 + \int_0^t \pi_s ds + V^\theta _t ,\quad t\in [0,T]$$ 
The dynamics of the state variable is given by:
\begin{align*}
dR_t & = \left[ (q_t - u_t) \psi (D_t) - c(q_t) - k(X_t) + \alpha(t,D_t) \theta_t \right] dt	+ \beta(t,D_t) \theta_t dW_t \\
dX_t & = u_t dt \\
dD_t & = aD_t dt + \sigma dW_t 
\end{align*}
Let us introduce the value function of the optimization problem as 
\[ v(t,r,x,d) = \sup_{(u, q, \theta) \in \mathcal A_t } \Esp \left[ \left( R_T \right)^\gamma | Z_t =(r,x,d) \right]\]
where $\mathcal A_t$ denotes the set of all admissible controls starting at time $t$. The corresponding Hamilton-Jacobi-Bellman equation (hereafter HJB equation) is given by
\begin{equation}\label{HJB} -v_t - \sup_{(u,q,\theta) \in A} \L v = 0, \quad \textrm{with } A := [\underline u , \overline u] \times [0,\bar q] \times \mathbb R\end{equation}
with terminal condition \begin{equation}\label{terminal} v(T,r,x,d) = r^\gamma \end{equation}
and where 
\[ \L v  = u v_x + ad v_d + \left[ (q - u) \psi (d) - c(q) - k(x) + \alpha (t,d) \theta \right]  v_r
		     + \frac{1}{2} \sigma^2 v_{dd} + \frac{1}{2} \beta (t,d) ^2 \theta^2 v_{rr} \]

The HJB equation may be rewritten as
\begin{align*}
0 = - & v_t - ad v_d + k(x) v_r  - \frac{1}{2} \sigma^2 v_{dd} \\
 - & \sup_{(u,q,\theta)\in A} \left\{ u v_x  + \left[ (q - u) \psi(d) - c(q) + \alpha (t,d) \theta \right]  v_r  + \frac{1}{2} \beta (t,d) ^2 \theta^2  v_{rr} \right\}
\end{align*}
Rearranging the terms gives
\begin{align}
0 = - & v_t - ad v_d + k(x) v_r  - \frac{1}{2} \sigma^2 v_{dd} \nonumber  \\
 - & \sup_{(u,q,\theta)\in A} \left\{  (v_x - \psi(d) v_r ) u  - (c(q) - q \psi(d)) v_r + \alpha (t,d) \theta v_r + \frac{1}{2} \beta (t,d) ^2 \theta^2 v_{rr}  \right\}  \label{hjb}
\end{align}

We notice immediately from the above HJB equation that the optimal candidate rule for the storage management is
\begin{equation}\label{u*} u_t ^* = \underline u \mathbf 1_{(\psi(D_t) v_r > v_x)} + \overline u  \mathbf 1_{(\psi(D_t) v_r \leq v_x)}\end{equation}
where, to simplify the notation, we dropped the arguments $(t,R_t,X_t, D_t)$ from the derivatives of the value functions $v_r$ and $v_x$. 
Depending on the ratio $\eta = v_x/\psi(d)$ between the marginal utility of one unit of storage and the spot price, it is optimal either to buy and store at maximum capacity or to withdraw and sell at maximum capacity. One may have thought that this ratio should simply be compared to one. This is not the case as it has to be compared to the marginal utility of one unit of wealth, $v_r$. As pointed at the end of the preceding section, we see that the storage control depends on the wealth and hence on the trading activities.

Furthermore, the heuristic computation above confirms the fact that the optimal control for production is to produce until the marginal cost of production equals the spot price, i.e. Equation (\ref{optqu}).

Finally, solving the maximisation problem in (\ref{hjb}) gives the optimal control for the trading portfolio as
\begin{equation} \label{theta*} \theta^* _t  = - \frac{\alpha(t,D_t)}{\beta (t,D_t)^2} \frac{v_r}{v_{rr}} (t,R_t, X_t , D_t)\end{equation}

Since it is likely that $v_{rr}$ is negative because $U$ is concave, one recovers the expected result that the agent holds a long position if futures prices are exhibiting a positive trend. Moreover, $\theta^*$ is similar to the Sharpe ratio, a tradeoff between the expected trend of the futures prices compared to their volatility.

Notice that, to rigorously solve the optimization problem, one should prove that the Cauchy problem (\ref{HJB}, \ref{terminal}) admits a unique solution with the required regularity together with a verification theorem. This could be achieved using the techniques developed in, e.g., Pham (2002 \cite{pham02}) for quite a large class of multidimensional stochastic volatility models. Adapting such a method to our setting would however go far beyond the scope of the present article. 
\section{Conclusion}
\label{sec:conclusion}
In this article, we develop a parsimonious structural model of commodity prices that can explain the relation between the spot and the futures prices by arbitrage arguments. This result is obtained for every underlying asset, would it be storable or not. We show that the existence of a risk-neutral measure is the consequence of the finiteness of the operator's value function, and that the futures price converges to the spot price, regardless of the storability properties of the commodity. Finally, we discuss the solution of the trading-production problem faced by the agent, with a specification of the demand dynamics. We show how the different controls can be separated. In particular, the optimal storage policy is impacted by the possibility of trading on the futures market. 

\bibliographystyle{plain}

\appendix

\section{Proofs}
\subsection{Proposition on convergence}

\begin{proof}[Proof of Proposition \ref{F=S}] Assume that $\mathbb P(F_T \neq S_T)>0$ and let $A=\{F_T > S_T\}$ and $B=\{F_T < S_T\}$. Consider the following sequence of trading-production strategies: for $n \ge 1$,
\[ q=u=0, \quad \theta^n _t := \left(\alpha \mathbb P\left(A \mid \mathcal F_{T-\frac{1}{n}}\right) -\beta \mathbb P\left(B \mid \mathcal F_{T-\frac{1}{n}}\right)\right) \mathbf 1_{\left(T-\frac{1}{n} \le t \le T\right)},\]
where $\alpha$ and $\beta$ are arbitrary positive numbers. Since $A$ and $B$ are $\mathcal F_{T-}$-measurable ($S_t$ and $F_t$ are both continuous processes), each $\theta^n$ is a predictable and $(F_t)$-integrable trading strategy. Moreover the left-limit $\theta_{T-}$ exists for all $n\ge 1$. Pursuing such a sequence of strategies yields a limiting terminal wealth as $n \to \infty$ given by $r_0 + \alpha (F_T -S_T)\mathbf 1_A + \beta (S_T -F_T)\mathbf 1_B$.

Hence, letting $\alpha \to \infty$ and $\beta =0$, if $\mathbb P(A) >0$ or $\beta \to \infty$ and $\alpha =0$ if $\mathbb P(B)>0$ we get $v(x) = \infty$ (recall that $U(x) \to \infty$ when $x\to \infty$), which contradicts the well-posedness of our maximization problem. Thus, we can conclude that a.s. $F_T = S_T$. \end{proof}

For any family $\mathcal X$ of random variables, $\textrm{conv}(\mathcal X)$ will denote the set of all convex linear combinations of elements in $\mathcal X$.

\subsection{Proposition on $v(r_{0})$ }

\begin{proof}[Proof of Proposition \ref{NA}]
First notice that for all $r_0 >M$ we have
\[ \infty > v(r_0) = \sup_{u,q,\theta} \mathbb E[U(R_T ^{r_0,u,q,\theta})] \le \sup_\theta \mathbb E[ U(r_0 + V_T ^\theta)] \le \sup_\theta \mathbb E[ U(M + V_T ^\theta)] =: v_I (M),\]
where $v_I$ denotes the value function for the pure investment optimization problem. Now, suppose that NFLVR with simple trading strategies, production and storage is violated, so that we can find a sequence of terminal payoffs $R_T ^n = \int_0 ^T \pi^n _t dt + \int_0 ^T \theta^n _t dF_t$ such that $R_T ^n \to R_T ^0$ for some nonnegative random variable $R_T ^0$ with $\mathbb P(R_T ^0 >0)>0$ and $\| (R_T ^n)^- \|_\infty \to 0$ as $n\to \infty$. Here $\pi^n$ denotes the instantaneous profit coming from a production $q^n$ and a storage $u^n$. Hence, we have
\[ R_T ^n -M \le V_T ^n \le R_T ^n +M,\]
where we denote $V^n := V^{\theta^n}$. By Theorem 15.4.10 in \cite{DSbook} there exists a sequence $\tilde V_T ^n \in \textrm{conv}(V_T ^n , V_T ^{n+1}, \ldots)$ which converges a.s. to some random variable $\tilde V_T ^0$, which takes values in $[R^0 _T -M, R_T ^0 +M]$ a.s. Therefore $\tilde V_T ^0 \ge R_T ^0 -M \ge -M$ and $\mathbb P(\tilde V_T ^0 > -M) >0$. Moreover, since $\|(\tilde V_T ^n +M)^- \|_{\infty} \le \| (R_T ^n)^- \|_{\infty} \to 0$ and the latter converges to zero as $n \to \infty$, we also have $\|(\tilde V_T ^n +M)^- \| \to 0$. Hence, using Proposition 1.2 in \cite{Ankirchner05}, we obtain that $v_I (M) =\infty$ implying $v(r_0) = \infty$.
\end{proof}

\subsection{Proposition on the optimal production $q^*$}
\begin{proof}[Proof of Proposition \ref{production}]
It suffices to maximize $\omega$-wise inside the integral in the term $Y_T^{q}$ containing the production controls. Differentiating with respect to $q_t$ for a fixed $t$ gives $S_t - c'(q_t) =0$ so that, taking into account the constraint $q_t \in [0,\overline q]$ and since $c$ is strictly convex, we have (\ref{optqu}).
\end{proof}

\subsection{Proposition on the optimal storage and trading portfolio $(u^*,\theta^*)$ }
\begin{proof}[Proof of Proposition \ref{prop:existence}]
First of all, if one admits the existence of a solution, its uniqueness follows at once from the strict concavity of the utility function $U$. Let $(u^n,\theta^n)$ be a maximizing admissible sequence for the problem
(\ref{storinv}), i.e. $\mathbb E\left [  U\left(r_0 +Y^* _T + Z_T ^n + V_T^n \right)  \right] \to v(r_0 )$ as $n\to \infty$, where we denoted
\[ Z_T ^n := -\int_0 ^T (u^n _t S_t + k(X^n _t))dt, \quad X_t ^n := u_0 + \int_0 ^t u^n _s ds, \quad V_T ^n := \int_0 ^T \theta^n dF_t .\]

We prove the compactness property of the sequences $u^n$ and $\theta^n$ separately. \\
For the sequence of storage strategies $u^n$, we use the Koml\'os theorem, stating that for any sequence of r.v.'s $(\xi^n)$ bounded in $L^1$, one can extract a subsequence $(\xi^{n_k})$ converging a.s. in Cesaro sense to a random variable $\xi^0 \in L^1$ (see, e.g., Theorem 5.2 in Kabanov and Safarian, 2009 \cite{KS09}). We apply this theorem to the sequence of processes $u^n$, that can be viewed as random variables defined on the product space $(\Omega \times [0,T], \mathcal P , d\mathbb P dt)$ where $\mathcal P$ is the predictable $\sigma$-field. The sequence $u^n$ is clearly in $L^1$ since it takes values in the interval $[-\overline u , \overline u]$. Thus, there exists a predictable process $u^0$ taking values in the same interval, such that the Cesaro mean sequence $\tilde u^n := (1/n)\sum_{j=1}^n u^j$ converges a.e. towards $u^0$. Indeed it is immediate to check that the sequence $\tilde u^n$ takes values in $[-\overline u , \overline u]$ as well. Moreover, the cumulated storage process along the new sequence, $\tilde X^n _t := \int_0 ^t \tilde u_s ^n ds$, is well-defined since each $\tilde u^n$ is bounded and it takes values in $[0,\overline X]$. By Lebesgue dominated convergence we have $\tilde X^n _t \to X^0 _t := \int_0 ^t u_s ^0 ds$ a.s. for all $t\in [0,T]$. Since the function $k$ is continuous, we have $k(\tilde X^n _t) \to k(X_t ^0)$ a.s. for all $t$. Finally, thanks once more to the boundedness of the controls and to the continuity of $k$, we have $ \vert \tilde Z^n _T\vert \leq C \vert\int_0 ^T S_t dt \vert $, which is bounded (since $S_t$ is bounded uniformly in $t$). Therefore, applying the dominated convergence theorem again, we get $\tilde Z_T ^n \to Z_T ^0$ a.s. as $n\to \infty$.\\
 As for the compactness of the sequence of trading strategies $\theta^n$, we can work with the corresponding wealth process Cesaro mean sequence, that we denote by $\tilde V_T ^n$. The admissibility property and the uniform boundedness of $\tilde Z_T ^n$ yields that this sequence is uniformly bounded from below by some constant. Therefore, we can apply Theorem 15.4.10 in \cite{DSbook}, implying that there exists a convex combination $\widehat V_T ^n \in \textrm{conv}(\tilde V_T ^n , \tilde V_T ^{n+1}, \ldots)$, which converges a.s. and whose limit is dominated by some $V^0 _T := \int_0 ^T \theta^0 _t dF_t$ for some admissible $\theta^0$. Moreover, applying this procedure to the Cesaro means of storage strategies $\tilde u^n$ one gets another sequence of admissible storage strategies $\widehat u^n$ converging a.s. to the same process $u^0$ as $\tilde u^n$ before. \\

To conclude the proof, we need to show that:
\[ v(r_0) \le  \mathbb E [U(x+Y_T ^* + V_T ^0 + Z_T ^0)]\]
To do so, it suffices to use the assumption that $U$ satisfies RAE by proceeding as in the proof of, e.g., Theorem 7.3.4 in Pham (2000 \cite{PhamBook}). Repeating his arguments gives us the inequality above getting that $(u^0, \theta^0)$ is the optimal storage control $(u^*,\theta^*)$. The proof of existence is now completed.
\end{proof}

\end{document}